\newcommand{\N}{{\mathbb{N}}}
\newcommand{\F}{{\mathbb{F}}}
\newcommand{\bP}{{\mathbb{P}}}
\newcommand{\bL}{{\mathcal{L}}}
\newcommand{\cG}{{\mathcal{G}}}
\newcommand{\s}{{\bf s}}
\def \se {sequence }
\def \ses {sequences }
\def \com {complexity }
\newtheorem{theorem}{Theorem}
\newtheorem{corollary}{Corollary}
\newtheorem{lemma}{Lemma}
\newtheorem{definition}{Definition}
\newtheorem{remark}{Remark} 
\begin{document}
\title{Sequences with high nonlinear complexity}

\author{HARALD NIEDERREITER and CHAOPING XING}
\address{Harald Niederreiter,
Johann Radon Institute for Computational and Applied Mathematics,
Austrian Academy of Sciences,
Altenbergerstr. 69,
A-4040 Linz, AUSTRIA,
and Department of Mathematics, University of Salzburg, 
Hellbrunnerstr. 34, A-5020 Salzburg, AUSTRIA;
Chaoping Xing,
School of Physical and Mathematical Sciences, 
Nanyang Technological University,
Singapore 637371, REPUBLIC OF SINGAPORE
}
\email{ghnied@gmail.com (H. Niederreiter), xingcp@ntu.edu.sg (Chaoping Xing)}


\subjclass{11K45, 68Q30, 94A55, 94A60.}

\date{\today} 


\keywords{Linear complexity, nonlinear complexity, maximum-order complexity, pseudorandom sequence.}

\begin{abstract}
We improve lower bounds on the $k$th-order nonlinear complexity of pseudorandom sequences over finite fields and we establish a probabilistic result
on the behavior of the $k$th-order nonlinear complexity of random sequences over finite fields.
\end{abstract}

\maketitle

\section{Introduction} \label{sein}

Pseudorandom \ses over large finite fields are of interest for simulation methods since such \ses can be transformed easily into \ses of uniform
pseudorandom numbers in the unit interval $[0,1]$ (see \cite[Chapter~8]{N92}). Another area of applications is cryptography. In order to assess the
suitability of a pseudorandom sequence, complexity-theoretic and statistical requirements have to be tested. In practice, both categories of 
tests---complexity-theoretic and statistical---should be carried out since these two categories are in a sense independent (see e.g. the recent 
paper~\cite{N12}). A classical survey article on the testing of pseudorandom \ses in a cryptographic context is~\cite{R92}.

In this paper we focus on the complexity-theoretic analysis of pseudorandom \ses over finite fields.
A variety of \com measures for such \ses is available in the literature. The most common approach is to measure \com by the shortest length
of a feedback shift register that can generate the given sequence. The basic concept of this type is the \emph{linear complexity} (also called the \emph{linear span})
where only linear feedback shift registers are considered (see also Remark~\ref{relk} below). There is a considerable amount of literature on the linear \com
which is surveyed in~\cite{N03}, \cite{TW}, \cite{W10}, and the recent handbook article~\cite{MW13}. Far less work has been done on \com measures referring to feedback
shift registers with feedback functions of higher algebraic degree (we may call them ``nonlinear complexities"). A \com measure of this type which has received
some attention is the \emph{maximum-order complexity} due to Jansen~\cite{Ja}, \cite{Ja91} (see Remark~\ref{remo} below). There are also \com measures for \ses
based on pattern counting, such as the \emph{Lempel-Ziv complexity} (see~\cite{LZ} for the definition and~\cite{M91} for cryptographic applications). The well-known
\emph{Kolmogorov complexity} is not of practical relevance since it cannot be computed in general for \ses of large length.

This paper contributes to the theory of nonlinear complexities by improving lower bounds on nonlinear complexities of interesting pseudorandom \ses 
and by establishing a probabilistic result on the behavior of nonlinear complexities of random sequences. 
In Section~\ref{sede} we collect the basic definitions. In Sections~\ref{seco} and~\ref{sehf} we establish complexity bounds for certain explicit inversive sequences
and for newly constructed \ses from Hermitian function fields, respectively. Finally, in Section~\ref{sepr} we present the mentioned probabilistic result.

\section{Definitions} \label{sede}

We write $\F_q$ for the finite field with $q$ elements, where $q$ is an arbitrary prime power. For any positive integer $m$, let $\F_q[x_1,\ldots,x_m]$ 
be the ring of polynomials over $\F_q$ in the $m$ variables $x_1,\ldots,x_m$. Furthermore, we denote the set of positive integers by $\N$. Now we define
nonlinear complexities for \ses of finite length over $\F_q$.

\begin{definition} \label{decf} {\rm
Let $\s =(s_i)_{i=1}^n$ be a \se of length $n \ge 1$ over the finite field $\F_q$ and let $k \in \N$. If $s_i=0$ for $1 \le i \le n$, then we define
the \emph{$k$th-order nonlinear \com} $N^{(k)}(\s)$ to be $0$. Otherwise, let $N^{(k)}(\s)$ be the smallest $m \in \N$ for which there exists a
polynomial $f \in \F_q[x_1,\ldots,x_m]$ of degree at most $k$ in each variable such that
\begin{equation} \label{eqre}
s_{i+m}=f(s_i,s_{i+1},\ldots,s_{i+m-1}) \qquad \mbox{for } 1 \le i \le n-m.
\end{equation}
}
\end{definition}

\begin{remark} \label{rebd} {\rm
For $n=1$ we have $N^{(k)}(\s)=0$ or $1$. For $n \ge 2$ we always have $0 \le N^{(k)}(\s) \le n-1$, where the upper bound holds since~\eqref{eqre} 
is satisfied for $m=n-1$ and $f$ being the constant polynomial $s_n$. Both extreme values $0$ and $n-1$ can occur. This is trivial for $0$ by
Definition~\ref{decf}. Furthermore, if $\s =(s_i)_{i=1}^n$ with $s_i=0$ for $1 \le i \le n-1$ and $s_n=1$, then $N^{(k)}(\s)=n-1$, since the
assumption $N^{(k)}(\s) \le n-2$ easily leads to a contradiction.}
\end{remark}

\begin{remark} \label{remo} {\rm
In Definition~\ref{decf} it suffices to consider $1 \le k \le q-1$. This follows from the well-known fact that, as a map, any polynomial $f: \F_q^m
\to \F_q$ can be represented by a polynomial over $\F_q$ in $m$ variables of degree at most $q-1$ in each variable (see \cite[pp. 368--369]{LN}). For
$k \ge q-1$ all nonlinear complexities $N^{(k)}(\s)$ of a fixed $\s$ are the same and equal to the \emph{maximum-order complexity} $M(\s)=N^{(q-1)}(\s)$ 
introduced by Jansen~\cite{Ja}, \cite{Ja91}. Connections between the Lempel-Ziv \com and the maximum-order \com were studied in~\cite{Ja}, \cite{LKK06},
\cite{LKK07}.}
\end{remark}  

\begin{remark} \label{relk} {\rm
One may also consider the nonlinear \com $L^{(k)}(\s)$ where in Definition~\ref{decf} we replace ``of degree at most $k$ in each variable" by ``of
total degree at most $k$". It is then trivial that $L^{(k)}(\s) \ge N^{(k)}(\s)$ for any $k$ and $\s$. Note that $L^{(1)}(\s)$ is not quite the same
as the linear complexity $L(\s)$ of $\s$, since in the definition of $L(\s)$ we accept only homogeneous linear polynomials $f \in \F_q[x_1,\ldots,x_m]$
as feedback functions in~\eqref{eqre}, whereas in the definition of $L^{(1)}(\s)$ we accept also linear polynomials with constant term. We have
$L(\s) \ge L^{(1)}(\s) \ge L(\s)-1$ for any $\s$, where the first inequality is trivial and the second inequality follows from a remark in
\cite[p.~401]{LN}. In particular, any lower bound on $L^{(1)}(\s)$, like in Corollaries~\ref{cocf} and~\ref{coci} and in Theorem~\ref{thhf2} below, is also a lower bound
on the linear complexity $L(\s)$. }
\end{remark}

In order to define nonlinear complexities for infinite sequences, we proceed in analogy to the step from the linear \com to the linear \com profile
(see~\cite{MW13}), namely by considering nonlinear complexities of finite-length initial segments of a given infinite sequence.

\begin{definition} \label{deci} {\rm
Let $S=(s_i)_{i=1}^{\infty}$ be an infinite \se over $\F_q$. Then for any $k \in \N$ and $n \in \N$, we define $N_n^{(k)}(S)=N^{(k)}(\s_n)$
and $L_n^{(k)}(S)=L^{(k)}(\s_n)$, where $\s_n =(s_i)_{i=1}^n$. }
\end{definition}

\section{Complexity bounds for explicit inversive sequences} \label{seco}

We first consider \ses of finite length that belong to the family of \emph{explicit inversive pseudorandom sequences} introduced in~\cite{MW04}. 
Let $e$ be a primitive element of $\F_q$, i.e., a generator of the cyclic multiplicative group $\F_q^*$ of nonzero elements of $\F_q$, and choose an
element $a \in \F_q^*$. Let $\s =(s_i)_{i=1}^{q-2}$ be the sequence over $\F_q$ defined by
\begin{equation} \label{eqcf}
s_i=(ae^i-a)^{-1} \qquad \mbox{for } 1 \le i \le q-2.
\end{equation}

\begin{theorem} \label{thcf}
Let $\s =(s_i)_{i=1}^{q-2}$ be the \se over $\F_q$ defined by~\eqref{eqcf}. Then for any integer $k$ with $1 \le k \le q-1$ we have
$$
N^{(k)}(\s_n) \ge (n-1)/(k+1) \qquad \mbox{for } 1 \le n \le q-2,
$$
where $\s_n=(s_i)_{i=1}^n$.
\end{theorem}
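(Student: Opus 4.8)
The plan is to assume that $m:=N^{(k)}(\s_n)$ is small and to produce a nonzero univariate polynomial over $\F_q$ with too many roots. First, for $n=1$ the asserted inequality reads $N^{(k)}(\s_1)\ge 0$, which is trivial, so assume $n\ge 2$. Each term $s_i=(ae^i-a)^{-1}=a^{-1}(e^i-1)^{-1}$ is a nonzero element of $\F_q$ (since $e^i\ne 1$ for $1\le i\le q-2$), so Definition~\ref{decf} gives $m\ge 1$, while $m\le n-1\le q-3$ by Remark~\ref{rebd}. Then I would fix $f\in\F_q[x_1,\ldots,x_m]$ of degree at most $k$ in each variable with $s_{i+m}=f(s_i,s_{i+1},\ldots,s_{i+m-1})$ for $1\le i\le n-m$, and substitute $t=e^i$ together with $s_{i+j}=a^{-1}(e^jt-1)^{-1}$. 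This substitution is legitimate because the indices $i,\ldots,i+m$ all lie in $\{1,\ldots,q-2\}$, so the entries involved are genuine nonzero field elements; it turns each of the $n-m$ relations into
\[
\frac{a^{-1}}{e^mt-1}=f\!\left(\frac{a^{-1}}{t-1},\frac{a^{-1}}{et-1},\ldots,\frac{a^{-1}}{e^{m-1}t-1}\right),
\]
an identity in $\F_q$ holding at the $n-m$ distinct points $t\in\{e,e^2,\ldots,e^{n-m}\}$ (distinct since $n-m\le q-3<q-1$ and $e$ is primitive).

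Next I would clear denominators. Writing $f=\sum_\alpha c_\alpha x_1^{\alpha_1}\cdots x_m^{\alpha_m}$ with $0\le\alpha_j\le k$ and using the common denominator $D(t):=\prod_{j=0}^{m-1}(e^jt-1)^k$, which has degree exactly $mk$, the right-hand side above equals $P(t)/D(t)$ with
\[
P(t):=\sum_\alpha c_\alpha\,a^{-(\alpha_1+\cdots+\alpha_m)}\prod_{j=0}^{m-1}(e^jt-1)^{\,k-\alpha_{j+1}},
\]
a polynomial of degree at most $mk$. At each of the points $t=e,e^2,\ldots,e^{n-m}$ none of the factors $t-1,et-1,\ldots,e^{m-1}t-1$ nor $e^mt-1$ vanishes (they are all of the form $e^r-1$ with $1\le r\le q-2$, hence nonzero), so multiplying through by $(e^mt-1)D(t)$ shows that the polynomial
\[
R(t):=a^{-1}D(t)-(e^mt-1)P(t),
\]
which has degree at most $mk+1$, vanishes at the $n-m$ distinct points $e,e^2,\ldots,e^{n-m}$.

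Finally I would show that $R\not\equiv 0$. If $R$ were the zero polynomial, then the identity $a^{-1}D(t)=(e^mt-1)P(t)$ evaluated at $t=e^{-m}$ would give $a^{-1}D(e^{-m})=0$; but $D(e^{-m})=\prod_{j=0}^{m-1}(e^{j-m}-1)^k\ne 0$, because for $0\le j\le m-1$ one has $1\le m-j\le q-3<q-1=\mathrm{ord}(e)$, so $e^{j-m}\ne 1$. This contradiction shows $R$ is a nonzero polynomial of degree at most $mk+1$, hence it has at most $mk+1$ roots; therefore $n-m\le mk+1$, i.e. $n-1\le m(k+1)$, i.e. $m\ge(n-1)/(k+1)$, which is the claim since $m=N^{(k)}(\s_n)$. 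The hard part is precisely this last non-vanishing step: it is where the hypothesis $n\le q-2$ enters, through $m<q-1$, guaranteeing that the simple pole at $t=e^{-m}$ on the left-hand side is distinct from all the poles $t=1,e^{-1},\ldots,e^{-(m-1)}$ on the right-hand side.
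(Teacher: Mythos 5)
Your proposal is correct and follows essentially the same route as the paper: you form the same rational expression $-a^{-1}/(e^mt-1)+f\bigl(a^{-1}/(t-1),\ldots,a^{-1}/(e^{m-1}t-1)\bigr)$, use the same evaluation points $e,e^2,\ldots,e^{n-m}$, the same degree bound $km+1$, and the same non-vanishing argument at the point $e^{-m}$ (the paper phrases it via poles of a reduced rational function, you via clearing denominators and evaluating the resulting polynomial identity). The differences are purely presentational (explicit common denominator instead of a reduced fraction, keeping $a$ instead of normalizing $a=1$).
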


\begin{proof}
Since the $k$th-order nonlinear \com is invariant under the termwise multiplication of a \se by an element from $\F_q^*$, we can assume that $a=1$.
The result is trivial for $n=1$, and so we can also assume that $2 \le n \le q-2$. Suppose that $f \in \F_q[x_1,\ldots,x_m]$ with $1 \le m \le n-1$ is a
polynomial of degree at most $k$ in each variable such that
$$
s_{i+m}=f(s_i,s_{i+1},\ldots,s_{i+m-1}) \qquad \mbox{for } 1 \le i \le n-m.
$$
Thus, we have
\begin{equation} \label{eqgp}
-\frac{1}{e^{i+m}-1} + f \left(\frac{1}{e^i-1},\frac{1}{e^{i+1}-1},\ldots,\frac{1}{e^{i+m-1}-1} \right) =0 \qquad \mbox{for } 1 \le i \le n-m.
\end{equation}
Consider the rational function
\begin{equation} \label{eqra}
R(z)=-\frac{1}{e^mz-1} + f \left(\frac{1}{z-1},\frac{1}{ez-1},\ldots,\frac{1}{e^{m-1}z-1} \right) \in \F_q(z).
\end{equation}
Since $1 \le m < q-1$, we have $e^m \ne e^i$ for $0 \le i \le m-1$. Therefore $e^{-m}$ is not a pole of $f(1/(z-1),1/(ez-1),\ldots,1/(e^{m-1}z-1))$,
and so $R(z) \ne 0 \in \F_q(z)$. Write $R(z)$ in reduced form as $R(z)=v(z)/w(z)$ with $v(z),w(z) \in \F_q[z]$, $v(z) \ne 0$, $w(z) \ne 0$, and $\gcd(v(z),w(z))=1$. 
From~\eqref{eqgp} we get $R(e^i)=0$ for $1 \le i \le n-m$. Therefore $v(z)$ has at least $n-m$ zeros, and so $\deg(v(z)) \ge n-m$. On the other hand,
the definition of $R(z)$ in~\eqref{eqra} implies that $\deg(v(z)) \le \deg(w(z)) \le km+1$, and so $km+1 \ge n-m$. This yields $m \ge (n-1)/(k+1)$,
which is the desired bound.
\end{proof}

\begin{corollary} \label{cocf}
Let $\s =(s_i)_{i=1}^{q-2}$ be the \se over $\F_q$ defined by~\eqref{eqcf}. Then for any integer $k$ with $1 \le k \le q-1$ we have
$$
L^{(k)}(\s_n) \ge (n-1)/(k+1) \qquad \mbox{for } 1 \le n \le q-2,
$$
where $\s_n=(s_i)_{i=1}^n$.
\end{corollary}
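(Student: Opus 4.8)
The plan is to deduce Corollary~\ref{cocf} directly from Theorem~\ref{thcf} using the trivial comparison between the two complexity notions recorded in Remark~\ref{relk}. Recall that for any sequence $\s$ and any $k$ one has $L^{(k)}(\s) \ge N^{(k)}(\s)$, because every polynomial of total degree at most $k$ is in particular of degree at most $k$ in each variable, so the set of admissible feedback functions in the definition of $L^{(k)}$ is a subset of those admissible for $N^{(k)}$, forcing the minimal register length for $L^{(k)}$ to be at least as large. Applying this with $\s$ replaced by the initial segment $\s_n=(s_i)_{i=1}^n$ gives $L^{(k)}(\s_n) \ge N^{(k)}(\s_n)$ for every $n$ with $1 \le n \le q-2$.

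Next I would simply chain this with the bound already proved in Theorem~\ref{thcf}, namely $N^{(k)}(\s_n) \ge (n-1)/(k+1)$ for the explicit inversive sequence defined by~\eqref{eqcf}, valid for all integers $k$ with $1 \le k \le q-1$ and all $n$ in the stated range. Combining the two inequalities yields $L^{(k)}(\s_n) \ge N^{(k)}(\s_n) \ge (n-1)/(k+1)$, which is exactly the assertion of the corollary. No separate treatment of the trivial case $n=1$ is needed here since Theorem~\ref{thcf} already covers it.

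There is essentially no obstacle: the corollary is a formal consequence of Theorem~\ref{thcf} together with the definitional inequality $L^{(k)} \ge N^{(k)}$, and the proof is one line. If one wanted a self-contained argument instead of invoking Theorem~\ref{thcf} as a black box, one could repeat the proof of that theorem verbatim, observing that the only property of the feedback polynomial $f$ used there is a bound on $\deg(w(z))$, which in~\eqref{eqra} comes from the degree of $f$ in each variable; restricting to $f$ of total degree at most $k$ only makes this degree bound no worse, so the same estimate $\deg(w(z)) \le km+1$ and hence $m \ge (n-1)/(k+1)$ goes through. But the short deduction via Remark~\ref{relk} is cleaner, and that is the route I would present.
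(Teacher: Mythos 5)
Your deduction is correct and is exactly the paper's proof: the corollary follows from Theorem~\ref{thcf} together with the trivial inequality $L^{(k)}(\s_n) \ge N^{(k)}(\s_n)$ noted in Remark~\ref{relk}. Nothing further is needed.
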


\begin{proof}
This follows from Theorem~\ref{thcf} and Remark~\ref{relk}.
\end{proof}

For $k=1$, it follows from Corollary~\ref{cocf} and an inequality in Remark~\ref{relk} that for the linear \com $L(\s_n)$ of the initial segment $\s_n=(s_i)_{i=1}^n$
in Corollary~\ref{cocf} we have $L(\s_n) \ge (n-1)/2$ for $1 \le n \le q-2$. This improves on the lower bound $L(\s_n) \ge (n-1)/3$ shown in \cite[Theorem~1]{MW04}. 

Now we consider infinite periodic \ses belonging to the family of explicit inversive pseudorandom \ses introduced in~\cite{MW04}. Let $d$ be a positive divisor 
of $q-1$ with $d < q-1$ and let $u$ be an element of order $d$ of the 
multiplicative group $\F_q^*$. Such an element can be obtained as $u=e^{(q-1)/d}$, where $e$ is a primitive element of $\F_q$. Furthermore, choose
$b,c \in \F_q^*$ such that $cb^{-1}$ does not belong to the cyclic subgroup of $\F_q^*$ generated by $u$. Then we define the sequence $S=(s_i)_{i=1}^{\infty}$ by  
\begin{equation} \label{eqci}
s_i=(bu^i-c)^{-1} \qquad \mbox{for all } i \ge 1.
\end{equation}
Note that the sequence $S$ is periodic with least period $d$.

\begin{theorem} \label{thci}
Let $S=(s_i)_{i=1}^{\infty}$ be the \se over $\F_q$ defined by~\eqref{eqci}. Then for any integer $k$ with $1 \le k \le q-1$ we have
$$
N_n^{(k)}(S) \ge \min \, \{(n-1)/(k+1),(d-1)/k\} \qquad \mbox{for all } n \ge 1.
$$
\end{theorem}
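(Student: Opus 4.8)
The plan is to follow the proof of Theorem~\ref{thcf} closely, with $u^iz-c$ in place of $e^iz-1$, while keeping careful track of the fact that $S$ has period $d$. First I would clear away the trivial case $n=1$, where $N_n^{(k)}(S)\ge 0=\min\{0,(d-1)/k\}$. For $n\ge 2$: since $N_n^{(k)}(S)$ is unchanged by termwise multiplication of $S$ by a nonzero scalar and $s_i=b^{-1}(u^i-cb^{-1})^{-1}$, I may assume $b=1$, so that $s_i=(u^i-c)^{-1}$ for all $i$, where the element now denoted $c$ (namely the old $cb^{-1}$) does not lie in $\langle u\rangle$; in particular every $s_i$ is a well-defined nonzero element of $\F_q$. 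Put $m=N_n^{(k)}(S)=N^{(k)}(\s_n)$; since every $s_i$ is nonzero we have $m\ge 1$, and $m\le n-1$ by Remark~\ref{rebd}. If $m\ge d$, then since $k\ge 1$ we get $m\ge d>(d-1)/k$, and we are done; so from now on assume $1\le m\le d-1$.

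Choose $f\in\F_q[x_1,\ldots,x_m]$ of degree at most $k$ in each variable realizing the recurrence~\eqref{eqre} for $\s_n$, and set
\[
R(z)=-\frac{1}{u^mz-c}+f\Bigl(\frac{1}{z-c},\frac{1}{uz-c},\ldots,\frac{1}{u^{m-1}z-c}\Bigr)\in\F_q(z),
\]
so that substituting $z=u^i$ gives $R(u^i)=-s_{i+m}+f(s_i,\ldots,s_{i+m-1})=0$ for $1\le i\le n-m$. The crucial point is that $R\not\equiv 0$: the first summand has a pole at $z=cu^{-m}$, while the poles of the $f$-summand lie among $cu^{-j}$ with $0\le j\le m-1$, and $cu^{-m}=cu^{-j}$ would force $u^{m-j}=1$ with $1\le m-j\le m\le d-1<d$, which is impossible. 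Write $R=v/w$ in lowest terms with $v,w\in\F_q[z]$, $v\ne 0$.

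Next I would count zeros of $v$. No $u^i$ is a pole of $R$, since a pole $cu^{-j}$ equal to some $u^i$ would give $c=u^{i+j}\in\langle u\rangle$; hence $R(u^i)=0$ forces $v(u^i)=0$ for $1\le i\le n-m$. Because $u$ has order $d$, the list $u^1,\ldots,u^{n-m}$ contains exactly $\min\{n-m,d\}$ distinct elements, so $\deg v\ge\min\{n-m,d\}$. On the other hand, clearing denominators in the definition of $R$ shows $\deg v\le km+1$, exactly as in the proof of Theorem~\ref{thcf}. Combining, $\min\{n-m,d\}\le km+1$: if $n-m\le d$ this gives $n-1\le(k+1)m$, that is, $m\ge(n-1)/(k+1)$; if $n-m\ge d$ it gives $d-1\le km$, that is, $m\ge(d-1)/k$. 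In either case $m\ge\min\{(n-1)/(k+1),(d-1)/k\}$, which is the claim.

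The only real subtlety, and the step I expect to need the most care, is the interaction with the period $d$: one must treat the range $m\ge d$ separately (there $R$ could vanish identically), and one must count the values among $u^1,\ldots,u^{n-m}$ correctly as $\min\{n-m,d\}$ rather than $n-m$. Apart from that, the argument is a routine transcription of the proof of Theorem~\ref{thcf}.
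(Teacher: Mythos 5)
Your proposal is correct and follows essentially the same route as the paper: reduce to $b=1$, form the same rational function $R(z)$, show $R\ne 0$ using that $u$ has order $d$, and compare the zero count $\min\{n-m,d\}$ with the degree bound $km+1$. The only (cosmetic) difference is the case organization: the paper assumes $m<(d-1)/k$ up front, which forces $\min\{n-m,d\}=n-m$, whereas you assume only $m\le d-1$ and split on the minimum at the end; both yield the stated bound.
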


\begin{proof}
Since the $k$th-order nonlinear \com is invariant under the termwise multiplication of a \se by an element from $\F_q^*$, we can assume that $b=1$ and that $c$ does not
belong to the cyclic subgroup of $\F_q^*$ generated by $u$.
We can also assume that $n \ge 2$ and $N_n^{(k)}(S) < (d-1)/k$, for otherwise the result is trivial. Suppose that $f \in \F_q[x_1,\ldots,x_m]$ with 
$1 \le m \le n-1$ and $m < (d-1)/k$ is a polynomial of degree at most $k$ in each variable such that
$$
s_{i+m}=f(s_i,s_{i+1},\ldots,s_{i+m-1}) \qquad \mbox{for } 1 \le i \le n-m.
$$
Thus, we have 
\begin{equation} \label{eqhp}
-\frac{1}{u^{i+m}-c}+f \left(\frac{1}{u^i-c},\frac{1}{u^{i+1}-c},\ldots,\frac{1}{u^{i+m-1}-c} \right)=0 \qquad \mbox{for } 1 \le i \le n-m.
\end{equation}
Consider the rational function
\begin{equation} \label{eqrt}
R(z)=-\frac{1}{u^mz-c}+f \left(\frac{1}{z-c},\frac{1}{uz-c},\ldots,\frac{1}{u^{m-1}z-c} \right) \in \F_q(z).
\end{equation}
Since $1 \le m < (d-1)/k \le d-1$, we have $u^m \ne u^i$ for $0 \le i \le m-1$. Therefore $cu^{-m}$ is not a pole of $f(1/(z-c),1/(uz-c),\ldots,
1/(u^{m-1}z-c))$, and so $R(z) \ne 0 \in \F_q(z)$. Write $R(z)$ in reduced form as $R(z)=v(z)/w(z)$ with $v(z),w(z) \in \F_q[z]$, $v(z) \ne 0$,
$w(z) \ne 0$, and $\gcd(v(z),w(z))=1$. From~\eqref{eqhp} we get $R(u^i)=0$ for $1 \le i \le n-m$. Therefore $v(z)$ has at least $\min \, \{n-m,d\}$
zeros, and so $\deg(v(z)) \ge \min \, \{n-m,d\}$. On the other hand, the definition of $R(z)$ in~\eqref{eqrt} implies that $\deg(v(z)) \le \deg(w(z))
\le km+1$, and so
$$
km+1 \ge \min \, \{n-m,d\}.
$$
Now $m < (d-1)/k$ yields $km+1 < d$, and so we must have $\min \, \{n-m,d\} =n-m$. Therefore $km+1 \ge n-m$, hence $m \ge (n-1)/(k+1)$, and the proof
is complete.
\end{proof}

\begin{corollary} \label{coci}
Let $S=(s_i)_{i=1}^{\infty}$ be the \se over $\F_q$ defined by~\eqref{eqci}. Then for any integer $k$ with $1 \le k \le q-1$ we have
$$
L_n^{(k)}(S) \ge \min \, \{(n-1)/(k+1),(d-1)/k\} \qquad \mbox{for all } n \ge 1.
$$
\end{corollary}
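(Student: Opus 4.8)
The plan is to deduce this corollary immediately from Theorem~\ref{thci} together with the comparison of complexity measures recorded in Remark~\ref{relk}. Recall that for any finite-length \se $\s$ over $\F_q$ and any $k$ we have $L^{(k)}(\s) \ge N^{(k)}(\s)$: every feedback polynomial of degree at most $k$ in each variable that witnesses $N^{(k)}(\s)$ has, a fortiori, total degree at most $k$ in the relevant sense only if... wait, more precisely the inequality $L^{(k)}(\s)\ge N^{(k)}(\s)$ holds because the class of polynomials of total degree at most $k$ is contained in the class of polynomials of degree at most $k$ in each variable, so the minimization defining $N^{(k)}$ runs over a larger family of admissible feedback functions and hence yields a smaller (or equal) value. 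This is exactly the trivial inequality asserted in Remark~\ref{relk}.

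First I would fix $k$ with $1 \le k \le q-1$ and $n \ge 1$, set $\s_n = (s_i)_{i=1}^n$, and write, by Definition~\ref{deci},
$$
L_n^{(k)}(S) = L^{(k)}(\s_n) \ge N^{(k)}(\s_n) = N_n^{(k)}(S).
$$
Then I would invoke Theorem~\ref{thci}, which gives $N_n^{(k)}(S) \ge \min\{(n-1)/(k+1),(d-1)/k\}$ for all $n \ge 1$, and chain the two estimates to obtain the claimed lower bound on $L_n^{(k)}(S)$.

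There is essentially no obstacle here: the content is entirely carried by Theorem~\ref{thci}, whose proof already handles the rational-function argument controlling the number of zeros and the degree bound $\deg(v(z)) \le km+1$. The only thing to be careful about is that the comparison $L^{(k)} \ge N^{(k)}$ is applied to the finite initial segment $\s_n$ rather than to $S$ directly, but this is immediate from Definition~\ref{deci}, which defines both $L_n^{(k)}(S)$ and $N_n^{(k)}(S)$ through the same segment $\s_n$. Hence the corollary follows, in complete parallel with the deduction of Corollary~\ref{cocf} from Theorem~\ref{thcf}.
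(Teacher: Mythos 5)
Your proposal is correct and follows exactly the paper's proof, which likewise deduces the corollary by combining Theorem~\ref{thci} with the trivial inequality $L^{(k)}(\s_n) \ge N^{(k)}(\s_n)$ from Remark~\ref{relk}. The brief self-correction about why the inequality holds resolves to the right reason (total degree at most $k$ is the more restrictive class, so $N^{(k)}$ minimizes over a larger family), so there is no gap.
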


\begin{proof}
This follows from Theorem~\ref{thci} and Remark~\ref{relk}.
\end{proof}  

The lower bounds on nonlinear complexities in Theorem~\ref{thci} and Corollary~\ref{coci} are better than those for the periodic sequences over $\F_q$
(inversive generators, quadratic exponential generators, general nonlinear generators) shown in~\cite{GSW} and~\cite{MW03}. The exact value of the linear \com of any 
finite-length initial segment of the \se defined by~\eqref{eqci} is known from \cite[Corollary~7]{MW05}. Distribution properties and structural properties of this \se
were investigated in~\cite{W06}.

\section{Sequences obtained from Hermitian function fields} \label{sehf}

The length of the sequence~\eqref{eqcf} over $\F_q$ has order of magnitude $q$ and the period length of the sequence~\eqref{eqci} over $\F_q$ has an order of magnitude at most $q$.
In this section, we construct finite-length \ses over $\F_q$ with high nonlinear \com for which the length has an order of magnitude larger than $q$. This new construction 
of \ses uses the theory of global function fields. We follow the monographs~\cite{NX09} and~\cite{St} with regard to the notation and terminology for global function fields.

Let $F/\F_q$ be a global function field with full constant field $\F_q$. We write $\bP_F$ for the set of places of $F$. Let $\deg(P)$ denote the degree of the place
$P \in \bP_F$. If $\deg(P)=1$, then we speak of a \emph{rational place} of $F$. Let $\nu_P$ be the normalized discrete valuation corresponding to $P \in \bP_F$. For a divisor
$D$ of $F$, let $\bL(D)$ be the Riemann-Roch space associated with $D$. We note that $\bL(D)$ is a finite-dimensional vector space over $\F_q$. Let $\deg(D)$ denote the
degree of the divisor $D$. By the Riemann-Roch theorem \cite[Theorem 1.5.17]{St} we have
\begin{equation} \label{eqrr}
\dim(\bL(D))=\deg(D)+1-g \ \ \ \mbox{whenever } \deg(D) \ge 2g-1,
\end{equation}
where $g$ is the genus of $F$. For $P \in \bP_F$ and $h \in F$ with $\nu_P(h) \ge 0$, we write $h(P)$ for the residue class of $h$ modulo $P$ (see~\cite[p.~6]{St}).
If $P$ is a rational place, then $h(P) \in \F_q$.

Now let $H/\F_q$ be the Hermitian function field over $\F_q$ which exists whenever $q$ is a square, say $q=\ell^2$ with a prime power $\ell$. The Hermitian function
field $H/\F_q$ can be defined explicitly by $H=\F_q(x,y)$ with $y^{\ell}+y=x^{\ell +1}$. The function field $H/\F_q$ has exactly $\ell^3 +1$ rational places and genus
$g=\ell (\ell -1)/2$. A summary of the properties of $H/\F_q$ can be found in \cite[Lemma 6.4.4]{St}. We single out the rational place $P_{\infty} \in \bP_H$ which is
defined as the unique pole of $x$. 

Let $\cG ={\rm Aut}(H/\F_q)$ be the group of field automorphisms of the Hermitian function field $H/\F_q$ that fix the elements of $\F_q$. We refer to
\cite[Section~II]{XD} for a summary of the properties of the group $\cG$. If $\sigma \in \cG$ and $P \in \bP_H$, then the set $\sigma(P) := \{\sigma(h):
h \in P\}$ is again a place of $H$. We have the following simple facts (see \cite[Section~8.2]{St} and \cite[Lemma~2.1]{XD}).

\begin{lemma} \label{leau}
For any $\sigma \in \cG ={\rm Aut}(H/\F_q)$, $P \in \bP_H$, and $h \in H$ we have:\\
{\rm (i)} $\deg(\sigma(P))=\deg(P)$;\\
{\rm (ii)} $\nu_{\sigma(P)}(\sigma(h))=\nu_P(h)$;\\
{\rm (iii)} $\sigma(h)(\sigma(P))=h(P)$ if $\nu_P(h) \ge 0$.
\end{lemma}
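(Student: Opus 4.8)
The plan is to deduce all three statements from a single structural observation: every automorphism $\sigma \in \cG = {\rm Aut}(H/\F_q)$ permutes the valuation rings of $H/\F_q$. First I would recall that a place $P \in \bP_H$ is identified with the maximal ideal of its valuation ring $\mathcal{O}_P = \{h \in H : \nu_P(h) \ge 0\}$, so that $P \subseteq \mathcal{O}_P$ and $\sigma(P) = \{\sigma(h) : h \in P\}$ in the notation of the lemma. Since $\sigma$ is a field automorphism of $H$ fixing $\F_q$ pointwise, $\sigma(\mathcal{O}_P)$ is again a valuation ring of $H/\F_q$---it contains $\F_q$, is properly contained in $H$, and for each $z \in H^{*}$ contains $z$ or $z^{-1}$---and its maximal ideal is exactly $\sigma(P)$; hence $\sigma(P) \in \bP_H$ and $\mathcal{O}_{\sigma(P)} = \sigma(\mathcal{O}_P)$. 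Therefore $\sigma$ restricts to an $\F_q$-algebra isomorphism $\mathcal{O}_P \to \mathcal{O}_{\sigma(P)}$ carrying $P$ onto $\sigma(P)$, and it induces an $\F_q$-algebra isomorphism $\bar\sigma \colon \mathcal{O}_P/P \to \mathcal{O}_{\sigma(P)}/\sigma(P)$ of residue fields satisfying $\bar\sigma(h(P)) = \sigma(h)(\sigma(P))$ for every $h \in \mathcal{O}_P$.

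From here the three items are short. For (i), the degree of a place equals the $\F_q$-dimension of its residue field, and $\bar\sigma$ is an $\F_q$-linear isomorphism, so $\deg(\sigma(P)) = \deg(P)$. For (ii), a local uniformizer $t$ at $P$ (that is, $t \in P$ with $\nu_P(t) = 1$) is carried by $\sigma$ to a local uniformizer $\sigma(t)$ at $\sigma(P)$, because $\sigma$ maps units of $\mathcal{O}_P$ to units of $\mathcal{O}_{\sigma(P)}$ and $P$ onto $\sigma(P)$; writing a nonzero $h \in H$ as $h = u\,t^{\nu_P(h)}$ with $u \in \mathcal{O}_P^{*}$ and applying $\sigma$ gives $\nu_{\sigma(P)}(\sigma(h)) = \nu_P(h)$, while the case $h = 0$ is trivial. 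For (iii), the hypothesis $\nu_P(h) \ge 0$ means $h \in \mathcal{O}_P$, so $\sigma(h) \in \mathcal{O}_{\sigma(P)}$ and $\sigma(h)(\sigma(P)) = \bar\sigma(h(P))$; when $P$ is a rational place---so that $\sigma(P)$ is rational too by (i)---both residue fields equal $\F_q$ and $\bar\sigma$ is the identity on $\F_q$ since $\sigma$ fixes $\F_q$, whence $\sigma(h)(\sigma(P)) = h(P)$.

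I expect the only genuine subtlety to be the residue-field bookkeeping in (iii): for a place of degree larger than $1$, $h(P)$ and $\sigma(h)(\sigma(P))$ a priori lie in two distinct fields, and the claimed equality is to be understood through the canonical identification $\bar\sigma$; it is literal equality precisely in the rational case, which is the only case needed in Section~\ref{sehf}. This is also the sole point where one uses that $\sigma$ lies in ${\rm Aut}(H/\F_q)$ rather than being an arbitrary field automorphism of $H$, since that is what forces $\bar\sigma|_{\F_q} = {\rm id}$. The remaining content is the standard transport-of-structure argument for discrete valuation rings, for which the references \cite[Section~8.2]{St} and \cite[Lemma~2.1]{XD} cited in the statement suffice.
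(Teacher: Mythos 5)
Your proof is correct; the paper itself offers no argument for this lemma, merely citing \cite[Section~8.2]{St} and \cite[Lemma~2.1]{XD}, and your transport-of-structure argument (an automorphism over $\F_q$ permutes the valuation rings, carries maximal ideal to maximal ideal and uniformizer to uniformizer, and induces an $\F_q$-isomorphism of residue fields) is exactly the standard proof behind those references. Your remark on (iii) is also the right reading: for places of degree $>1$ the equality is via the induced residue-field isomorphism, while for rational places --- the only case used in Section~\ref{sehf} --- it is a literal identity in $\F_q$ because $\sigma$ fixes $\F_q$ pointwise.
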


Now, using the same notation as in \cite[Lemma~2.2]{XD}, let $\phi$ be the element of $\cG$ determined by
$$
\phi(x)=ex, \ \ \ \phi(y)=e^{\ell +1}y,
$$
where $e$ is a primitive element of $\F_q$. Then according to \cite[Lemma~2.2]{XD}, the rational place $P_{\infty}$ of $H$ satisfies $\phi(P_{\infty})=P_{\infty}$,
and under the action of $\phi$ on $\bP_H$ there are $\ell$ orbits each containing exactly $q-1$ distinct rational places of $H$. We denote these $(q-1) \ell$ distinct rational
places of $H$ occurring altogether in these $\ell$ orbits by 
$$
Q, \phi(Q), \ldots , \phi^{q-2}(Q), P_1, \phi(P_1), \ldots , \phi^{q-2}(P_1), \ldots,P_{\ell -1},\phi(P_{\ell -1}), 
\ldots, \phi^{q-2}(P_{\ell -1}). 
$$

By~\eqref{eqrr} we have $\dim(\bL((2g-1)P_{\infty}+Q))=g+1$ and $\dim(\bL((2g-1)P_{\infty}))=g$, and so we can choose an element $h \in \bL((2g-1)P_{\infty}+Q) \setminus 
\bL((2g-1)P_{\infty})$. Then we consider the \se $\s =(s_i)_{i=1}^M$ over $\F_q$ of length $M :=(q-1)(\ell -1)$ given by
\begin{equation} \label{eqhf}
\s =\left(h(P_1),h(\phi(P_1)),\ldots,h(\phi^{q-2}(P_1)),\ldots,h(P_{\ell -1}),h(\phi(P_{\ell -1})),\ldots,h(\phi^{q-2}(P_{\ell -1})) \right).
\end{equation}
The choice of $h$ guarantees that all terms of the \se $\s$ are well defined. Note that the length $M$ of $\s$ has order of magnitude $q^{3/2}$.

\begin{theorem} \label{thhf1}
Let $H/\F_q$ be the Hermitian function field over $\F_q$ with $q=\ell^2$ for some prime power $\ell$. Let $\s =(s_i)_{i=1}^M$ with $M=(q-1)(\ell -1)$ be the \se over $\F_q$
defined by~\eqref{eqhf}. Then for any integer $k$ with $1 \le k \le q-1$ we have
$$
N^{(k)}(\s_n) \ge \frac{(q-1) \lfloor n/(q-1) \rfloor -1}{\ell (\ell -1)k+ \lfloor n/(q-1) \rfloor} \qquad \mbox{for } 1 \le n \le M,
$$
where $\s_n =(s_i)_{i=1}^n$.
\end{theorem}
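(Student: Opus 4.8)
The plan is to transplant the argument of Theorem~\ref{thcf} into the Hermitian function field $H$: the rational function $R(z)$ there is replaced by a suitable element $G\in H$, the condition $R(e^i)=0$ becomes the vanishing of $G$ at the rational places $\phi^j(P_t)$, and the inequality "$\deg(\text{numerator})\le\deg(\text{denominator})$" is replaced by the basic fact that a nonzero element of $H$ has as many zeros as poles, counted as divisors. Throughout set $r=\lfloor n/(q-1)\rfloor$, so that $0\le r\le\ell-1$, and write $B=\bigl((q-1)r-1\bigr)/\bigl(\ell(\ell-1)k+r\bigr)$ for the asserted lower bound. First I would clear away the degenerate cases: if $\s_n$ is the all-zero sequence, then $h$ vanishes at the first $n$ (pairwise distinct, rational) places listed in~\eqref{eqhf}, so $n\le\deg((h)_0)=\deg((h)_\infty)\le 2g=\ell(\ell-1)<q-1$; hence $r=0$, $B<0$, and there is nothing to prove.

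So assume $\s_n$ is not all-zero, put $m=N^{(k)}(\s_n)\ge 1$, and suppose for a contradiction that $m<B$. A short estimate gives $B<(q-1)/(\ell k+1)\le\ell-1<q-2$, so $m\le q-2$; this is what will guarantee that the places $Q,\phi^{-1}(Q),\dots,\phi^{-m}(Q)$ are pairwise distinct and distinct from $P_\infty$. Fix a polynomial $f\in\F_q[x_1,\dots,x_m]$ of degree at most $k$ in each variable with $s_{i+m}=f(s_i,\dots,s_{i+m-1})$ for $1\le i\le n-m$, and recall that the automorphism $\phi$ used in~\eqref{eqhf} has order $q-1$ and fixes $P_\infty$. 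The crucial object is
\[
G=-\phi^{-m}(h)+f\bigl(h,\phi^{-1}(h),\dots,\phi^{-(m-1)}(h)\bigr)\in H.
\]
By Lemma~\ref{leau}(iii), applied with $\sigma=\phi^j$, one has $\bigl(\phi^{-c}(h)\bigr)\bigl(\phi^j(P_t)\bigr)=\bigl(\phi^{-(j+c)}(h)\bigr)(P_t)$, and by the construction~\eqref{eqhf} this equals $s_{(t-1)(q-1)+j+c+1}$ as soon as $0\le j+c\le q-2$. Consequently, for $1\le t\le r$ and $0\le j\le q-2-m$, putting $i=(t-1)(q-1)+j+1$ (so that $1\le i\le n-m$ and $s_i,\dots,s_{i+m}$ are all among the first $n$ terms), the feedback relation gives $G\bigl(\phi^j(P_t)\bigr)=-s_{i+m}+f(s_i,\dots,s_{i+m-1})=0$. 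Since distinct orbits are disjoint and $\phi$ acts simply transitively on the $q-1$ places of each orbit, $G$ has at least $r(q-1-m)$ distinct rational zeros.

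For the pole side, since $\phi$ fixes $P_\infty$, every $\phi^{-c}(h)$ lies in $\bL\bigl((2g-1)P_\infty+\phi^{-c}(Q)\bigr)$ and has a simple pole at $\phi^{-c}(Q)$ (the pole at $Q$ being simple because $h\notin\bL((2g-1)P_\infty)$). Reading off pole orders monomial by monomial in $f$ then shows that $G$ has pole order at most $(2g-1)mk$ at $P_\infty$, at most $k$ at each $\phi^{-j}(Q)$ with $0\le j\le m-1$, and exactly $1$ at $\phi^{-m}(Q)$ (where $f(h,\dots,\phi^{-(m-1)}(h))$ is regular). In particular $G\ne 0$, and since all these places are rational and $2g=\ell(\ell-1)$,
\[
\deg\bigl((G)_\infty\bigr)\le(2g-1)mk+mk+1=\ell(\ell-1)mk+1.
\]
Equating $\deg((G)_0)=\deg((G)_\infty)$ and invoking the zero count gives $r(q-1-m)\le\ell(\ell-1)mk+1$, that is, $m\ge\bigl((q-1)r-1\bigr)/\bigl(\ell(\ell-1)k+r\bigr)=B$, contradicting $m<B$. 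Hence $N^{(k)}(\s_n)=m\ge B$.

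I expect the main difficulty to be the bookkeeping rather than any genuinely hard estimate: one has to choose $G$ so that the evaluations $\bigl(\phi^{-c}(h)\bigr)\bigl(\phi^j(P_t)\bigr)$ match the indexing in~\eqref{eqhf} exactly, to check that in each of the first $r$ orbits the feedback relation really is available for the full range $0\le j\le q-2-m$ (which is what produces the factor $r=\lfloor n/(q-1)\rfloor$ in the zero count), and to keep the a priori bound $m\le q-2$ under control, since without it the places $\phi^{-c}(Q)$ could collide and both the pole count and the nonvanishing of $G$ would break down. The pole-order estimate and the concluding arithmetic are then a routine adaptation of the proof of Theorem~\ref{thcf}.
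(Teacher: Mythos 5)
Your proposal is correct and takes essentially the same route as the paper's proof: the same auxiliary element $w=-\phi^{-m}(h)+f\bigl(h,\phi^{-1}(h),\ldots,\phi^{-(m-1)}(h)\bigr)$, the same count of at least $r(q-1-m)$ zeros at the places $\phi^j(P_t)$, the same pole bound $(2g-1)mk+mk+1=\ell(\ell-1)mk+1$ at $P_\infty$ and the orbit of $Q$ (with the simple pole at $\phi^{-m}(Q)$ giving $w\neq 0$), and the same concluding degree identity. The only differences are cosmetic bookkeeping: you argue by contradiction for general $n$ and dispose of the all-zero case by a degree estimate, whereas the paper reduces to $n$ a multiple of $q-1$ by monotonicity and splits off the trivial case $N^{(k)}(\s_n)\ge q-1$.
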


\begin{proof}
The result is trivial for $n < q-1$, and so we can assume that $n \ge q-1$. Since $N^{(k)}(\s_n)$ is a nondecreasing function of $n$, we can also assume that $n$ is a
multiple of $q-1$, say $n=(q-1)r$ with $r \in \N$ and $r \le \ell -1$. Now we fix such an $n$. We claim that $\s_n$ is not the zero sequence. For otherwise there exist $n$ 
rational places $Q_1,\ldots,Q_n$ of $H$ different from $P_{\infty}$ and $Q$ that are zeros of $h$. This implies that $h \in \bL(D)$ with
$$
D := (2g-1)P_{\infty} +Q-Q_1 - \cdots - Q_n.
$$
But 
$$
\deg(D)=2g-n=\ell (\ell -1)-n \le \ell (\ell -1) -(q-1)=- \ell +1 < 0,
$$ 
and so $h=0$ by \cite[Corollary 3.4.4]{NX09}. This is a contradiction to the fact that $h \notin \bL((2g-1)P_{\infty})$ by the choice of $h$.

Thus we have $N^{(k)}(\s_n) \ge 1$. If $N^{(k)}(\s_n) \ge q-1$, then the lower bound in the theorem holds trivially. Hence we can assume that $N^{(k)}(\s_n) \le q-2$.
Suppose that $f \in \F_q[x_1,\ldots,x_m]$ with $1 \le m \le q-2 \le n-1$ is a polynomial of degree at most $k$ in each variable such that
\begin{equation} \label{eqge}
s_{i+m}=f(s_i,s_{i+1},\ldots,s_{i+m-1}) \qquad \mbox{for } 1 \le i \le n-m.
\end{equation}
By applying~\eqref{eqge} only for $i=(q-1)(j-1)+t+1$ with $j=1,\ldots,r$ and $t=0,1,\ldots,q-m-2$, we obtain
$$
-h(\phi^{t+m}(P_j))+f \left(h(\phi^t(P_j)),h(\phi^{t+1}(P_j)),\ldots,h(\phi^{t+m-1}(P_j)) \right) =0
$$
for $1 \le j \le r$ and $0 \le t \le q-m-2$. Lemma~\ref{leau}(iii) yields
$$
h(\phi^{t+b}(P_j))=h(\phi^b(\phi^t(P_j)))=\phi^{-b}(h)(\phi^t(P_j))
$$
for $1 \le j \le r$ and all integers $t \ge 0$ and $b \ge 0$, and so
\begin{equation} \label{eqsp}
-\phi^{-m}(h)(\phi^t(P_j))+f \left( h(\phi^t(P_j)),\phi^{-1}(h)(\phi^t(P_j)),\ldots,\phi^{-(m-1)}(h)(\phi^t(P_j)) \right) =0
\end{equation}
for $1 \le j \le r$ and $0 \le t \le q-m-2$.

Consider the element
$$
w=-\phi^{-m}(h)+f \left(h,\phi^{-1}(h),\ldots,\phi^{-(m-1)}(h) \right) \in H.
$$
We have $\nu_Q(h)=-1$ by the choice of $h$, hence $\nu_{\phi^{-m}(Q)}(\phi^{-m}(h))=-1$ by Lemma~\ref{leau}(ii), and so the place $\phi^{-m}(Q)$ is a pole of $\phi^{-m}(h)$.
On the other hand, for $b=0,1,\ldots,m-1$, the place $\phi^{-m}(Q)$ is not a pole of $\phi^{-b}(h)$ (use again Lemma~\ref{leau}(ii) and the choice of $h$), and so
$\phi^{-m}(Q)$ is not a pole of $f \left(h,\phi^{-1}(h),\ldots,\phi^{-(m-1)}(h) \right)$. Hence we must have $w \ne 0$.

Now we study the zeros and poles of $w$. First of all, it follows from~\eqref{eqsp} that all the $(q-m-1)r$ distinct places $\phi^t(P_j)$, $1 \le j \le r$, 
$0 \le t \le q-m-2$, are zeros of $w$. Therefore the degree of the zero divisor $(w)_0$ of $w$ satisfies
$$
\deg((w)_0) \ge (q-m-1)r.
$$
By the choice of $h$ and Lemma~\ref{leau}(ii), the only possible poles of $w$ are the rational places $P_{\infty},Q,\phi(Q),\ldots,\phi^{q-2}(Q)$. Note that $\nu_{P_{\infty}}(h)
\ge -(2g-1)$. Since $P_{\infty}$ is invariant under $\phi$, Lemma~\ref{leau}(ii) shows that $\nu_{P_{\infty}}(\phi^{-b}(h)) \ge -(2g-1)$ for any integer $b \ge 0$. It follows
that $\nu_{P_{\infty}}(w) \ge -(2g-1)km$. Now we determine the possible poles of $w$ in the set $\mathcal{Q}=\{Q,\phi(Q),\ldots,\phi^{q-2}(Q)\}$ of rational places. 
The only pole of $h$ in $\mathcal{Q}$ is $Q$ and its pole order is $1$. Furthermore, for any integer $b$ with $1 \le b \le m \le q-2$, Lemma~\ref{leau}(ii) shows that
the only pole of $\phi^{-b}(h)$ in $\mathcal{Q}$ is $\phi^{-b}(Q)=\phi^{q-1-b}(Q)$ and its pole order is $1$. Altogether, the degree of the pole divisor $(w)_{\infty}$
of $w$ satisfies
$$
\deg((w)_{\infty}) \le (2g-1)km+km+1=2gkm+1.
$$
Now $\deg((w)_{\infty})=\deg((w)_0)$ by a fundamental identity for algebraic function fields (see \cite[Theorem 1.4.11]{St}), and so
$$
2gkm+1 \ge \deg((w)_{\infty}) = \deg((w)_0) \ge (q-m-1)r.
$$
It follows that
$$
m \ge \frac{(q-1)r-1}{2gk+r},
$$
which completes the proof of the theorem (recall that we assumed without loss of generality that $n=(q-1)r$).
\end{proof}

Note that the lower bound on $N^{(k)}(\s_n)$ in Theorem~\ref{thhf1} is of order of magnitude $n/(qk)$. If $n$ is of the maximal order of magnitude $q^{3/2}$, then the
lower bound is of order of magnitude $q^{1/2}/k$. In contrast to Section~\ref{seco}, we can obtain a better lower bound for the nonlinear complexity $L^{(k)}$ (see
Remark~\ref{relk}) of the sequence~\eqref{eqhf} than that implied by Theorem~\ref{thhf1}.

\begin{theorem} \label{thhf2}
Let $H/\F_q$ be the Hermitian function field over $\F_q$ with $q=\ell^2$ for some prime power $\ell$. Let $\s =(s_i)_{i=1}^M$ with $M=(q-1)(\ell -1)$ be the \se over $\F_q$
defined by~\eqref{eqhf}. Then for any integer $k \ge 1$ we have
$$
L^{(k)}(\s_n) \ge \frac{(q-1) \lfloor n/(q-1) \rfloor -(\ell^2 -\ell -1)k-1}{k+ \lfloor n/(q-1) \rfloor } \qquad \mbox{for } 1 \le n \le M,
$$
where $\s_n =(s_i)_{i=1}^n$.
\end{theorem}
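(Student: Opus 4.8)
The plan is to run the proof of Theorem~\ref{thhf1} almost verbatim, the single new ingredient being a sharper bound on the pole order at $P_{\infty}$ of the auxiliary function, which is exactly where the total-degree hypothesis pays off. The asserted inequality is trivial for $n<q-1$, and since $L^{(k)}(\s_n)$ is nondecreasing in $n$ we may assume $n=(q-1)r$ with $r\in\N$, $1\le r\le \ell-1$. The argument that $\s_n$ is not the zero sequence carries over unchanged: otherwise $h$ would lie in $\bL((2g-1)P_{\infty}+Q-Q_1-\cdots-Q_n)$ for $n$ rational places different from $P_{\infty},Q$, a divisor of degree $2g-n=\ell(\ell-1)-n\le -\ell+1<0$, forcing $h=0$, a contradiction; hence $L^{(k)}(\s_n)\ge 1$. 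A direct computation shows the right-hand side of the claimed bound is strictly less than $q-1$ (it is decreasing in $k$ and increasing in $r$, and at $k=1$, $r=\ell-1$ it equals $(\ell-1)(\ell^2-\ell-1)/\ell<(\ell-1)(\ell+1)=q-1$), so we may assume $L^{(k)}(\s_n)\le q-2$.

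Assuming $L^{(k)}(\s_n)\le q-2$, I would take $f\in\F_q[x_1,\ldots,x_m]$ of total degree at most $k$, with $1\le m\le q-2$, realizing~\eqref{eqge}; note that here $k$ is arbitrary and no normalization of $f$ is required. Applying~\eqref{eqge} along the $\ell$ orbits as in Theorem~\ref{thhf1} (for $i=(q-1)(j-1)+t+1$, $j=1,\ldots,r$, $t=0,\ldots,q-m-2$) and invoking Lemma~\ref{leau}(iii) yields~\eqref{eqsp} together with the element
$$
w=-\phi^{-m}(h)+f\left(h,\phi^{-1}(h),\ldots,\phi^{-(m-1)}(h)\right)\in H .
$$
The same reasoning as before shows $w\ne 0$: the place $\phi^{-m}(Q)$ is a simple pole of $\phi^{-m}(h)$ by Lemma~\ref{leau}(ii) and the choice of $h$, but is a pole of no $\phi^{-b}(h)$ with $0\le b\le m-1$, hence not of $f(h,\ldots,\phi^{-(m-1)}(h))$. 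From~\eqref{eqsp} the $(q-m-1)r$ distinct places $\phi^t(P_j)$ are zeros of $w$, so $\deg((w)_0)\ge (q-m-1)r$.

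The crucial step is the bound on $\deg((w)_{\infty})$. Since $f$ has \emph{total} degree at most $k$, every monomial of $f(h,\ldots,\phi^{-(m-1)}(h))$ is a product of at most $k$ of the elements $h,\phi^{-1}(h),\ldots,\phi^{-(m-1)}(h)$, each of which has valuation $\ge -(2g-1)$ at $P_{\infty}$ by Lemma~\ref{leau}(ii) and the choice of $h$; combined with $\nu_{P_{\infty}}(\phi^{-m}(h))\ge -(2g-1)$ this gives $\nu_{P_{\infty}}(w)\ge -(2g-1)k$, i.e. a pole order at $P_{\infty}$ of at most $(2g-1)k$, in place of the $(2g-1)km$ of Theorem~\ref{thhf1}. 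The only other possible poles of $w$ lie among $Q,\phi(Q),\ldots,\phi^{q-2}(Q)$: the place $\phi^{-m}(Q)$ contributes a simple pole through $-\phi^{-m}(h)$, while for $b=0,\ldots,m-1$ the place $\phi^{-b}(Q)$ is a pole of $f(h,\ldots,\phi^{-(m-1)}(h))$ of order at most $\deg_{x_{b+1}}f\le k$. Hence $\deg((w)_{\infty})\le (2g-1)k+mk+1$. The identity $\deg((w)_{\infty})=\deg((w)_0)$ then yields $(2g-1)k+mk+1\ge (q-m-1)r$, i.e. $m(k+r)\ge (q-1)r-(2g-1)k-1$, so
$$
m\ge \frac{(q-1)r-(2g-1)k-1}{k+r} .
$$
Substituting $2g-1=\ell^2-\ell-1$ and $r=\lfloor n/(q-1)\rfloor$ completes the proof. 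I expect no genuine obstacle beyond Theorem~\ref{thhf1} itself; the one point requiring care is the pole-order bookkeeping, namely using the total-degree bound precisely at $P_{\infty}$, where it saves a factor of $m$, while the zero estimate and the estimates at the places $\phi^{-b}(Q)$ are exactly as in Theorem~\ref{thhf1}.
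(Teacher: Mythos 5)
Your proposal is correct and follows essentially the same route as the paper: the paper's proof of Theorem~\ref{thhf2} is exactly "proceed as in Theorem~\ref{thhf1}, but use the total-degree hypothesis to get $\nu_{P_{\infty}}(w)\ge -(2g-1)k$, whence $\deg((w)_{\infty})\le (2g-1)k+km+1$," which is precisely your key step. Your additional bookkeeping (the check that the claimed bound is below $q-1$, the per-place pole orders at the $\phi^{-b}(Q)$, and the final algebra with $2g-1=\ell^2-\ell-1$) is accurate and consistent with the paper's argument.
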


\begin{proof}
We proceed exactly as in the proof of Theorem~\ref{thhf1}. The only difference is that now $\nu_{P_{\infty}}(w) \ge -(2g-1)k$ since the polynomial $f$ has \emph{total degree}
at most $k$. Therefore
$$
\deg((w)_{\infty}) \le (2g-1)k+km+1,
$$
and this yields the desired result.
\end{proof}

For small $k$ and for $n$ of a larger order of magnitude than $q$, the lower bound on $L^{(k)}(\s_n)$ in Theorem~\ref{thhf2} is of order of magnitude $q$.

\section{A probabilistic result} \label{sepr}

Let $\mu_q$ be the uniform probability measure on $\F_q$ which assigns the measure $1/q$ to each element of $\F_q$. Let $\F_q^{\infty}$ be the \se
space over $\F_q$ and let $\mu_q^{\infty}$ be the complete product probability measure on $\F_q^{\infty}$ induced by $\mu_q$. We say that a property
of sequences $S \in \F_q^{\infty}$ holds $\mu_q^{\infty}$-\emph{almost everwhere} if it holds for a set of sequences $S$ of $\mu_q^{\infty}$-measure $1$.
We may view such a property as a typical property of a random \se over $\F_q$.

\begin{theorem} \label{thpr}
Let $k$ be an integer with $1 \le k \le q-1$. Then $\mu_q^{\infty}$-almost everywhere we have
$$
\liminf_{n \to \infty} \Big(N_n^{(k)}(S)-\frac{\log n}{\log (k+1)} \Big) \ge 0.
$$
\end{theorem}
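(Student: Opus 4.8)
The plan is to use a counting (first Borel--Cantelli) argument. The key observation is that if $N_n^{(k)}(S) \le m$, then there is a polynomial $f \in \F_q[x_1,\ldots,x_m]$ of degree at most $k$ in each variable with $s_{i+m} = f(s_i,\ldots,s_{i+m-1})$ for $1 \le i \le n-m$. In particular, once the first $m$ terms $s_1,\ldots,s_m$ and the polynomial $f$ are fixed, the entire initial segment $\s_n$ is determined. The number of such polynomials $f$ is $q^{(k+1)^m}$, and the number of choices for the first $m$ terms is $q^m$. Hence the set of sequences $S$ with $N_n^{(k)}(S) \le m$ is contained in a union of cylinder sets, and
$$
\mu_q^{\infty}\big(\{S : N_n^{(k)}(S) \le m\}\big) \le q^{m + (k+1)^m} \cdot q^{-(n-m)} = q^{(k+1)^m + 2m - n}.
$$

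First I would fix $\varepsilon > 0$ and set $m = m(n) = \lfloor \tfrac{\log n}{\log(k+1)} - \varepsilon \rfloor$, so that $(k+1)^m \le (k+1)^{(\log n)/\log(k+1) - \varepsilon} = n (k+1)^{-\varepsilon} =: c_\varepsilon n$ with $c_\varepsilon = (k+1)^{-\varepsilon} < 1$. Then the exponent in the bound above is $(k+1)^{m(n)} + 2m(n) - n \le c_\varepsilon n + 2\log n / \log(k+1) - n = -(1-c_\varepsilon) n + O(\log n)$, which tends to $-\infty$ linearly in $n$. Therefore
$$
\sum_{n=1}^{\infty} \mu_q^{\infty}\big(\{S : N_n^{(k)}(S) \le m(n)\}\big) \le \sum_{n=1}^{\infty} q^{-(1-c_\varepsilon) n + O(\log n)} < \infty.
$$
By the Borel--Cantelli lemma, for $\mu_q^{\infty}$-almost every $S$ we have $N_n^{(k)}(S) > m(n) = \lfloor \tfrac{\log n}{\log(k+1)} - \varepsilon \rfloor$ for all sufficiently large $n$, hence $N_n^{(k)}(S) - \tfrac{\log n}{\log(k+1)} \ge -\varepsilon - 1$ eventually. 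To remove the trivial loss of $1$ from the floor, I note that $N_n^{(k)}(S) > m(n)$ gives $N_n^{(k)}(S) \ge m(n) + 1 \ge \tfrac{\log n}{\log(k+1)} - \varepsilon$, so $\liminf_n \big(N_n^{(k)}(S) - \tfrac{\log n}{\log(k+1)}\big) \ge -\varepsilon$ almost everywhere. Finally, taking a sequence $\varepsilon_j \downarrow 0$ and intersecting the corresponding full-measure sets yields $\liminf_n \big(N_n^{(k)}(S) - \tfrac{\log n}{\log(k+1)}\big) \ge 0$ on a set of measure $1$, as desired.

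The one point requiring a little care is the verification that the set $\{S : N_n^{(k)}(S) \le m\}$ is genuinely covered by $q^{m+(k+1)^m}$ cylinder sets of measure $q^{-(n-m)}$ each: one must confirm that the measure of the cylinder $\{S : s_1 = a_1, \ldots, s_n = a_n\}$ is exactly $q^{-n}$ under $\mu_q^{\infty}$, and that specifying $s_1,\ldots,s_m$ together with $f$ pins down $s_{m+1},\ldots,s_n$ via the recurrence. This is routine given the product structure of $\mu_q^{\infty}$. The main "obstacle" is really just bookkeeping — choosing $m(n)$ so that $(k+1)^{m(n)}$ stays a fixed fraction below $n$ and checking the series converges — there is no deep difficulty here, and the zero-sequence edge case ($N_n^{(k)}(S) = 0$) is absorbed harmlessly into the event $\{N_n^{(k)}(S) \le m(n)\}$.
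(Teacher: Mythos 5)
Your proposal is correct and follows essentially the same route as the paper: bound the number of length-$n$ sequences with $N^{(k)}\le m$ by counting pairs (feedback polynomial, initial segment), take $m(n)\approx \frac{\log n}{\log(k+1)}-\varepsilon$, apply the Borel--Cantelli lemma, and intersect over a sequence $\varepsilon_j\downarrow 0$. The only (harmless) slack is that your bound $q^{(k+1)^m+2m-n}$ double-counts a factor $q^m$ --- either $q^{(k+1)^m+m}$ cylinders of measure $q^{-n}$ each, or $q^{(k+1)^m}$ recursion events of measure $q^{-(n-m)}$ each, give the paper's sharper $q^{(k+1)^m+m-n}$ --- but the weaker estimate is still a valid upper bound and the series still converges.
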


\begin{proof}
For $m,n \in \N$ with $m \le n-1$, let $T_n^{(k)}(m)$ be the number of sequences $\s$ of length $n$ over $\F_q$ with $N^{(k)}(\s) \le m$. Each \se
$\s =(s_i)_{i=1}^n$ counted by $T_n^{(k)}(m)$ is (not necessarily uniquely) determined by a polynomial $f \in \F_q[x_1,\ldots,x_m]$ of degree at most
$k$ in each variable and by initial values $s_1,\ldots,s_m$ of the recursion~\eqref{eqre}. Since the number of possibilities for $f$ is $q^{(k+1)^m}$,
we have
\begin{equation} \label{eqrn}
T_n^{(k)}(m) \le q^{(k+1)^m +m} \qquad \mbox{for } 1 \le m \le n-1.
\end{equation}
Now fix $\varepsilon > 0$ and put
$$
b_n=\frac{\log n}{\log (k+1)} -\varepsilon \qquad \mbox{for } n=1,2,\ldots
$$
and
$$
A_n=\{S \in \F_q^{\infty} : N_n^{(k)}(S) \le b_n \} \qquad \mbox{for } n=1,2,\ldots .
$$
Then $1 \le \lfloor b_n \rfloor \le n-1$ for sufficiently large $n$, and so~\eqref{eqrn} yields
$$
\mu_q^{\infty}(A_n) = q^{-n} T_n^{(k)}(\lfloor b_n \rfloor ) \le q^{(k+1)^{b_n} +b_n-n}
$$
for sufficiently large $n$. Now for some $0 < \delta < 1$ we have
$$
(k+1)^{b_n} +b_n-n < n \Big(\frac{1}{(k+1)^{\varepsilon}} + \frac{\log n}{n \log(k+1)} -1 \Big) < - \delta n
$$
for sufficiently large $n$, and so $\sum_{n=1}^{\infty} \mu_q^{\infty}(A_n) < \infty$. Then the Borel-Cantelli lemma (see \cite[Lemma~3.14]{Br} and
\cite[p.~228]{Lo}) shows that the set of all $S \in \F_q^{\infty}$ for which $S \in A_n$ for infinitely many $n$ has $\mu_q^{\infty}$-measure $0$.
In other words, $\mu_q^{\infty}$-almost everywhere we have $S \in A_n$ for at most finitely many $n$. It follows then from the definition of $A_n$ 
that $\mu_q^{\infty}$-almost everywhere we have
$$
N_n^{(k)}(S) > b_n =\frac{\log n}{\log (k+1)} - \varepsilon
$$
for sufficiently large $n$. This means that $\mu_q^{\infty}$-almost everywhere we have 
$$
\liminf_{n \to \infty} \Big(N_n^{(k)}(S) - \frac{\log n}{\log (k+1)} \Big) \ge - \varepsilon.
$$
By applying this for all $\varepsilon =1/r$ with $r \in \N$ and noting that the intersection of countably many sets of $\mu_q^{\infty}$-measure $1$ has
again $\mu_q^{\infty}$-measure $1$, we obtain the result of the theorem.
\end{proof}

\begin{remark} \label{repr} {\rm
For $k=q-1$, Theorem~\ref{thpr} says that $\mu_q^{\infty}$-almost everywhere the maximum-order \com $N_n^{(q-1)}(S)$ (see Remark~\ref{remo}) grows
at least like $(\log n)/(\log q)$ as $n \to \infty$. This is in good accordance with the result of Jansen~\cite{Ja} (see also~\cite{EM} and~\cite{JB}) 
that the expected value of
$N_n^{(q-1)}(S)$ behaves asymptotically like $(\log n)/(\log q)$, up to an absolute constant. On the basis of these results, it may be conjectured
that $\mu_q^{\infty}$-almost everywhere we have
$$
\lim_{n \to \infty} \frac{N_n^{(q-1)}(S)}{\log n} = C_q
$$
for some constant $C_q > 0$ depending only on $q$. A similar behavior may be conjectured for $N_n^{(k)}(S)$ with $1 \le k < q-1$, where $C_q$ is
replaced by a constant $C_{q,k} > 0$ depending only on $q$ and $k$. In view of this heuristic that the expected order of magnitude of $N_n^{(k)}(S)$
for random \ses $S$ is $\log n$, it is clear that the \ses considered in Sections~\ref{seco} and~\ref{sehf} can be said to have high nonlinear complexity. }
\end{remark}

\section*{Acknowledgments}

We are grateful to Arne Winterhof of the Austrian Academy of Sciences for very fruitful discussions on the topic of this paper.
The first author enjoyed the hospitality of Nanyang Technological University in Singapore at the time when this project was initiated.

\end{document}